\title{Detecting Points in Integer Cones of Polytopes is Double-Exponentially Hard\thanks{This work is a part of project BOBR (ŁK, KM, MP, MS) that has received funding from the European Research Council (ERC) under the European Union’s Horizon 2020 research and innovation programme (grant agreement No. 948057).
A. Lassota was supported by the Swiss National Science Foundation within the project \emph{Complexity of Integer Programming}~(207365).}}
\author{blah}
\author{
	\L{}ukasz Kowalik\thanks{Institute of Informatics, University of Warsaw, Poland (\texttt{kowalik@mimuw.edu.pl})}
	\and Alexandra Lassota\thanks{Institute of Mathematics, EPFL, Lausanne, Switzerland ({\texttt{alexandra.lassota@epfl.ch}})}
	\and Konrad Majewski\thanks{Institute of Informatics, University of Warsaw, Poland ({\texttt{k.majewski@mimuw.edu.pl}})}
	\and Michał Pilipczuk\thanks{Institute of Informatics, University of Warsaw, Poland ({\texttt{michal.pilipczuk@mimuw.edu.pl}})}
	\and Marek Sokołowski\thanks{Institute of Informatics, University of Warsaw, Poland ({\texttt{marek.sokolowski@mimuw.edu.pl}})}
}
\newtheorem{theorem}{Theorem}
\newtheorem{claim}{Claim}
\newcommand{\Inline}[1]{#1\xspace}
\newcommand{\ProblemFormat}[1]{\textsc{#1}}
\newcommand{\ProblemName}[1]{\Inline{\ProblemFormat{#1}}}
\newcommand{\Oh}{\mathcal{O}}
 \newcommand{\cqed}{\ensuremath{\lhd}}
 \newenvironment{claimproof}{\par
         \pushQED{\cqed}%
         \normalfont \topsep6\p@\@plus6\p@\relax
         \trivlist
         \item\relax
         {\itshape
                 Proof of the claim\@addpunct{.}}\hspace\labelsep\ignorespaces
 }{%
         \hfill\popQED\endtrivlist\@endpefalse
 }
\renewcommand{\leq}{\leqslant}
\renewcommand{\geq}{\geqslant}
\renewcommand{\le}{\leqslant}
\renewcommand{\ge}{\geqslant}
\renewcommand{\setminus}{-}
\newcommand{\ceil}[1]{\lceil #1 \rceil}
\newcommand{\ThreeSAT}{\ProblemName{3-SAT}}
\newcommand{\SubsetSum}{\ProblemName{Subset Sum}}
\newcommand{\SubsetSumRep}{\ProblemName{Subset Sum with Multiplicities}}
\newcommand{\PointInCone}{\ProblemName{Point in Cone}}
\newcommand{\PolytopeIntersectCone}{\ProblemName{Cone and Polytope Intersection}}
\newcommand{\BinPacking}{\ProblemName{Bin Packing}}
\newcommand{\yesinstance}{\textsc{Yes}-instance\xspace}
\newcommand{\RR}{\mathbb{R}}
\newcommand{\ZZ}{\mathbb{Z}}
\newcommand{\onevector}{\mathbf{1}}
\newcommand{\polyfont}[1]{\mathcal{#1}}
\newcommand{\poly}[1]{\polyfont{#1}}
\newcommand{\IntCone}{\mathsf{IntCone}}
\newcommand{\encoding}{\mathsf{enc}}
\newcommand{\Ic}{\mathcal{I}}
\newif\ifcomment
\begin{document}

%\title{A doubly-exponential lower bound for detecting points in~integer cones of polytopes}

%\thanks{This work is a part of project BOBR (ŁK, KM, MP, MS) that has received funding from the European Research Council (ERC) under the European Union’s Horizon 2020 research and innovation programme (grant agreement No. 948057).
% A. Lassota was supported by the Swiss National Science Foundation within the project \emph{Complexity of integer Programming}~(207365).}
% 
% \author{
% Alexandra Lassota\thanks{Institute of Mathematics, EPFL, Lausanne, Switzerland \texttt{alexandra.lassota@epfl.ch}}
% \and Łukasz Kowalik\thanks{Institute of Informatics, University of Warsaw, Poland (\texttt{kowalik@mimuw.edu.pl})}
% \and Konrad Majewski\thanks{Institute of Informatics, University of Warsaw, Poland (\texttt{k.majewski@mimuw.edu.pl})}
% \and Michał Pilipczuk\thanks{Institute of Informatics, University of Warsaw, Poland (\texttt{michal.pilipczuk@mimuw.edu.pl})}
% \and Marek Sokołowski\thanks{Institute of Informatics, University of Warsaw, Poland (\texttt{marek.sokolowski@mimuw.edu.pl})}
% }

\maketitle

\begin{abstract}
Let $d$ be a positive integer.
For a finite set $X \subseteq \RR^d$, we define its \emph{integer cone} as the set $\mathsf{IntCone}(X) \coloneqq \{ \sum_{x \in X} \lambda_x \cdot x \mid \lambda_x \in \mathbb{Z}_{\geq 0} \} \subseteq \RR^d$.
Goemans and Rothvoss showed that, given two polytopes $\mathcal{P}, \mathcal{Q} \subseteq \RR^d$ with $\mathcal{P}$ being bounded, one can decide whether $\mathsf{IntCone}(\mathcal{P} \cap \mathbb{Z}^d)$ intersects $\mathcal{Q}$ in time $\mathsf{enc}(\mathcal{P})^{2^{\mathcal{O}(d)}} \cdot \mathsf{enc}(\mathcal{Q})^{\mathcal{O}(1)}$~[J.~ACM 2020], where $\mathsf{enc}(\cdot)$ denotes the number of bits required to encode a polytope through a system of linear inequalities. This result is the cornerstone of their $\mathsf{XP}$ algorithm for \BinPacking parameterized by the number of different item sizes.

We complement their result by providing a conditional lower bound.
In particular, we prove that, unless the ETH fails, there is no algorithm which, given a bounded polytope $\mathcal{P} \subseteq \RR^d$ and a point $q \in \ZZ^d$, decides whether $q \in \mathsf{IntCone}(\mathcal{P} \cap \ZZ^d)$ in time $\mathsf{enc}(\mathcal{P}, q)^{2^{o(d)}}$. Note that this does {\em{not}} rule out the existence of a fixed-parameter tractable algorithm for the problem, but shows that dependence of the running time on the parameter $d$ must be at least doubly-exponential.

\end{abstract}

 \begin{textblock}{20}(-1.9, 4.2)
	\includegraphics[width=40px]{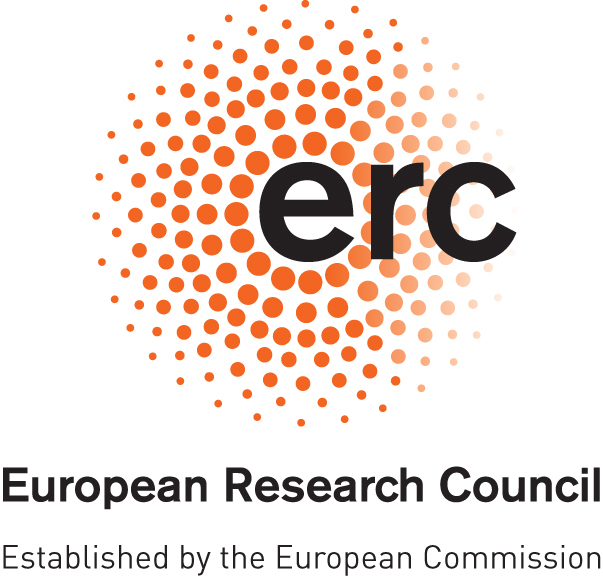}%
\end{textblock}
\begin{textblock}{20}(-2.15, 4.5)
	\includegraphics[width=60px]{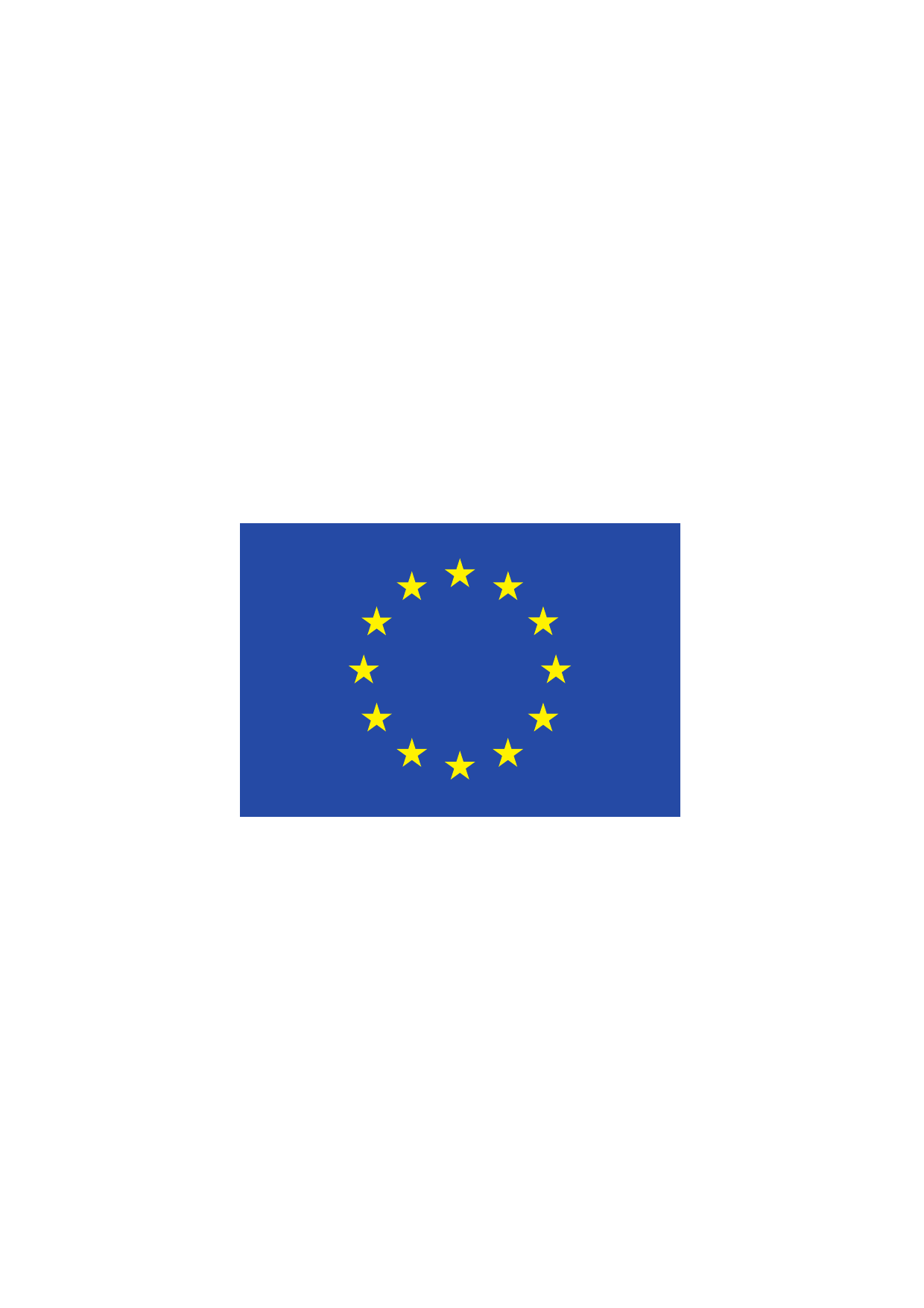}%
\end{textblock}

\section{Introduction}
\label{sec:intro}

Consider the following high-multiplicity variant of the \BinPacking problem: given a vector~$s=(s_1,\ldots,s_d)\in [0,1]^d$ of  item sizes and a vector of multiplicities $a=(a_1,\ldots,a_d)\in \ZZ_{\geq 0}^d$, find the smallest integer $B$ so that the collection of items containing $a_i$ items of size $s_i$, for each $i\in \{1,\ldots,d\}$, can be entirely packed into $B$ unit-size bins. In their celebrated work~\cite{GoemansR20}, Goemans and Rothvoss gave an algorithm for this problem with time complexity~$\encoding(s,a)^{2^{\Oh(d)}}$, where $\encoding(s,a)$ denotes the total bitsize of the encoding of $s$ and $a$ in binary. In the terminology of parameterized complexity, this puts high-multiplicity \BinPacking parameterized by the number of different item sizes in the complexity class $\mathsf{XP}$.

In fact, Goemans and Rothvoss studied the more general \PolytopeIntersectCone problem, defined as follows: given two polytopes $\poly{P},\poly{Q}\subseteq \RR^d$, where $\poly{P}$ is bounded, is there a point in $\poly{Q}$ that can be expressed as a nonnegative integer combination of integer points in $\poly{P}$? Goemans and Rothvoss gave an algorithm for this problem with running time~$\encoding(\poly{P})^{2^{\Oh(d)}}\cdot \encoding(\poly{Q})^{\Oh(1)}$, where $\encoding(\poly{R})$ denotes the total bitsize of the encoding of a polytope~$\poly{R}$ through a system of linear inequalities. They showed that high-multiplicity \BinPacking admits a simple reduction to \PolytopeIntersectCone, where in essence, integer points in $\poly{P}$ correspond to possible configurations of items that fit into a single bin and $\poly{Q}$ is the point corresponding to all items (more precisely, $\poly{P}=\{({x\atop 1})\in \RR^{d+1}_{\ge 0}\mid s^Tx\le 1\}$ and $\poly{Q}=\{({a\atop B})\}$). 
In fact, \PolytopeIntersectCone is a much more versatile problem: in \cite[Section~6]{GoemansR20}, Goemans and Rothvoss present a number of applications of their result to other problems in the area of scheduling.

Whether the result of Goemans and Rothvoss for high-multiplicity \BinPacking can be improved to fixed-parameter tractability is considered a major problem in the area. It was asked already by Goemans and Rothvoss in~\cite{GoemansR20}, addressed again by Jansen and Klein in~\cite{JansenK20}, and also discussed in the survey of Mnich and van Bevern~\cite{MnichB18}.

In this work, we take a step into solving the complexity of this problem. We prove the following result that shows that the doubly-exponential dependence on $d$ in the running times of algorithms for \PolytopeIntersectCone is necessary, assuming the Exponential-Time Hypothesis (ETH). The lower bound holds even for the simpler \PointInCone problem, where the polytope $\poly{Q}$ consists of a single integer point $q\in \ZZ^d$.

\begin{theorem}
\label{thm:main}
Unless the ETH fails, there is no algorithm solving \PointInCone in time~$\encoding(\poly{P}, q)^{2^{o(d)}}$, where $\encoding(\poly{P}, q)$ is the total number of bits required to encode both $\poly{P}$ and~$q$.
\end{theorem}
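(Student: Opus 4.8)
The plan is to prove \cref{thm:main} by a polynomial-time reduction from \ThreeSAT. By the Sparsification Lemma, under the ETH it suffices to rule out a $2^{o(n)}$-time algorithm already for \ThreeSAT instances with $n$ variables and $m=\Oh(n)$ clauses. I would reduce such an instance to a \PointInCone instance $(\poly{P},q)$ with $\poly{P}\subseteq\RR^d$, $d=\Oh(\log n)$, and $\encoding(\poly{P},q)=n^{\Oh(1)}$. Then any algorithm for \PointInCone running in time $\encoding(\poly{P},q)^{2^{o(d)}}$ would solve \ThreeSAT in time $\bigl(n^{\Oh(1)}\bigr)^{2^{o(\log n)}}=n^{n^{o(1)}}=2^{o(n)}$, contradicting the ETH (here I used $2^{o(\log n)}=n^{o(1)}$). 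So the whole game is to compress the $n$ Boolean variables into only $\Oh(\log n)$ dimensions, at polynomial bit-cost; the ``doubly-exponential in $d$'' budget in the theorem matches the fact that a polytope in $\RR^d$ of polynomial bit-size may have up to $2^{n^{\Oh(1)}}$ integer points, so its integer cone behaves like an integer program with that many columns but only $d$ rows.

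For the reduction itself, the first step is to encode satisfiability as a \SubsetSumRep-style condition, following the textbook reduction from \ThreeSAT to \SubsetSum: introduce two ``gadget vectors'' per variable (for setting it to \textsc{true} / \textsc{false}) and a constant number of ``slack'' gadgets per clause, and---working in a large base $X$---lay out their contributions as digits of large integers, so that one block of digits forces each variable to pick exactly one of its two gadgets with multiplicity one, while another block of digits together with the slack gadgets forces every clause to receive a contribution from a satisfied literal. An auxiliary ``counting'' coordinate pins down the total multiplicity used, which is $\Oh(n)$; since this bounds every digit below $X$, no carries occur and the equation $q=\sum_v\lambda_v v$ decomposes digit-by-digit exactly as in the classical argument. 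On its own this only produces (essentially) a one-dimensional instance with $\Theta(n)$ explicitly listed gadget vectors, which gives $d=\Theta(n)$---not yet what we want.

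The second step, and the heart of the matter, is to realize these $\Theta(n)$ gadget vectors not explicitly but as (a controlled subset of) the integer points of a polytope of dimension $\Oh(\log n)$, for instance a knapsack-type polytope $\{x\ge 0 : s^\top x\le 1\}$ with cleverly chosen item sizes $s$, whose lattice points are precisely the admissible ``configurations'' (this is also the viewpoint that links \PointInCone to high-multiplicity \BinPacking). The difficulty---and the step I expect to be the main obstacle---is that a polytope, being convex and described by linear inequalities, cannot carve out exactly the combinatorially structured set of intended gadgets; it inevitably contains many further, spurious, integer points, and in the integer cone one is free to use them. The whole construction must therefore be arranged so that spurious points are useless: either they violate one of the facets once combined, or, by a monotonicity argument, they force the combination to over- or undershoot $q$ in some coordinate whose total has been pinned down. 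Carrying this out while simultaneously keeping $d=\Oh(\log n)$ and $\encoding(\poly{P},q)$ polynomial is where the technical work concentrates; it may be cleanest to route it through an intermediate hardness statement for \SubsetSumRep or \PolyPolyInt with the right parameter dependence.

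It then remains to verify the equivalence in both directions: a satisfying assignment yields the intended selection of gadget vectors summing to $q$, and conversely, by the digit-by-digit and no-spurious-points arguments, any representation $q=\sum_v\lambda_v v$ must use exactly one gadget per variable consistently and satisfy every clause, hence encodes a satisfying assignment. Combined with the parameter bounds $d=\Oh(\log n)$ and $\encoding(\poly{P},q)=n^{\Oh(1)}$ and the running-time calculation above, this proves \cref{thm:main} under the ETH.
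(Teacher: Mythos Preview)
Your high-level architecture is right, and the parameter arithmetic at the end is exactly what the paper does: $d=\Oh(\log n)$ and $\encoding(\poly{P},q)=n^{\Oh(1)}$ give $\encoding(\poly{P},q)^{2^{o(d)}}=2^{o(n)}$. You are also correct that routing through \SubsetSumRep is the cleanest thing to do; in fact the paper does precisely this, first establishing $2^{o(n)}$-hardness for \SubsetSumRep with $t\le 2^{\Oh(n)}$ and then reducing that to \PointInCone.

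However, the step you yourself flag as ``the main obstacle'' is a genuine gap, and your proposed way around it would not work. You suggest realising the $\Theta(n)$ gadget vectors as a subset of the integer points of a knapsack-type polytope and then arguing that the inevitable spurious integer points are ``useless''. The paper avoids this problem entirely: it builds a polytope $\poly{P}\subseteq\RR^{d+1}$ whose set of integer points is \emph{exactly} the intended gadget set, with no spurious points at all. Concretely, take the $2^d$ corners $\chi_0,\ldots,\chi_{2^d-1}$ of the cube $[0,1]^d$ and attach to each $\chi_i$ a prescribed last coordinate $p_i(d+1)\in\{0,a_1,\ldots,a_n\}$. The constraints $0\le x(j)\le 1$ for $j\in[d]$ already force the first $d$ coordinates of any integer point to be some $\chi_{i^\star}$; then, for each $i$, a ``big-$M$'' inequality of the form
\[
x(d+1)+t\!\!\sum_{j:\chi_i(j)=0}\! x(j)+t\!\!\sum_{j:\chi_i(j)=1}\!(1-x(j))\ \ge\ p_i(d+1)
\]
(and its symmetric counterpart bounding $x(d+1)$ from above) is tight exactly when the first $d$ coordinates equal $\chi_i$ and vacuous otherwise, since a single mismatched bit already contributes $t$ to the left-hand side. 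This pins $x(d+1)=p_i(d+1)$ at each corner and yields $\poly{P}\cap\ZZ^{d+1}=\{p_0,\ldots,p_{2^d-1}\}$, at encoding cost $\Oh(2^d\cdot d\cdot\log t)=\Oh(n\log n\cdot\log t)$. With $q=t\cdot\onevector_{d+1}$ the equivalence with \SubsetSumRep is then a two-line check (the last coordinate gives $\sum_i\lambda_i a_i=t$, and the pairing $\chi_i+\chi_{2^d-1-i}=\onevector$ together with $p_{2^d-1}=(\onevector,0)$ lets you hit $t$ on the first $d$ coordinates). This construction is, not coincidentally, the tight example of Goemans and Rothvoss for the Eisenbrand--Shmonin support bound; recognising that it can be turned into a hardness gadget is the missing idea in your plan.
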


Notice that \cref{thm:main} does {\em{not}} rule out the possibility that there exists a fixed-parameter algorithm with running time $f(d)\cdot \encoding(\poly{P},q)^{\Oh(1)}$ for some function $f$. However, it shows that for this to hold, function $f$ would need to be at least doubly exponential, assuming the ETH.

Let us briefly elaborate on our proof of \cref{thm:main} and its relation to previous work. The cornerstone of the result of Goemans and Rothvoss is a statement called Structure Theorem, which essentially says the following: if an instance of \PolytopeIntersectCone has a solution, then it has a solution whose {\em{support}} --- the set of integer points in $\poly{P}$ participating in the nonnegative integer combination yielding a point in~$\poly{Q}$ --- has size at most $2^{2d+1}$. Moreover, except for a few outliers, this support is contained within a carefully crafted set $X$ consisting of roughly $\encoding(\poly{P})^{\Oh(d)}$ integer points within $\poly{P}$. In subsequent work~\cite{JansenK20}, Jansen and Klein showed a more refined variant of the Structure Theorem where $X$ is just the set of vertices of the convex hull of the integer points lying in $\poly{P}$; but the exponential-in-$d$ bound on the size of the support persists. The appearance of this bound in both works~\cite{GoemansR20,JansenK20} originates in the following elegant observation of Eisenbrand and Shmonin~\cite{EisenbrandS06}: whenever some point $v$ can be represented as a nonnegative integer combination of integer points in $\poly{P}$, one can always choose such a representation of $v$ with support of size bounded by $2^d$ (see~\cite[Lemma~3.4]{GoemansR20} for a streamlined proof). In~\cite[Section~8]{GoemansR20}, Goemans and Rothvoss gave an example showing that the $2^d$ bound is tight up to a multiplicative factor of~$2$, thereby arguing that within their framework, one cannot hope for any substantially better bound on the support size. The main conceptual contribution of this work can be expressed as follows: the construction showing the tightness of the observation of Eisenbrand and Shmonin not only exposes a bottleneck within the support-based approach of~\cite{GoemansR20,JansenK20}, but in fact can be used as a gadget in a hardness reduction proving that the doubly-exponential dependence on $d$ in the running time is necessary for the whole problem, assuming the ETH.

Finally, we remark that tight doubly-exponential lower bounds under the ETH appear scarcely in the literature, as in reductions proving such lower bounds, the parameter of the output instance has to depend logarithmically on the size of the input instance of \ThreeSAT. A few examples of such lower bounds can be found here:~\cite{CyganPP16,FominGLSZ19,JansenKL23,KnopPW20,KunnemannMSSBW23,MarxM16}; our work adds \PointInCone to this rather exclusive list.

\section{Preliminaries}
\label{sec:prelims}

For a positive integer $n$, we denote $[n] \coloneqq \{1, 2, \ldots, n\}$ and $[n]_0 \coloneqq \{0, 1, \ldots, n-1\}$.

\subparagraph*{Euclidean spaces.}
Fix a positive integer $d$.
We call the elements of $\RR^d$ \emph{vectors} (or \emph{points}).
Given a vector $x \in \RR^d$, we denote its $i$-th coordinate (for $i \in [d]$) by $x(i)$.
By $\onevector_d$, we denote the $d$-dimensional vector of all ones, that is, $\onevector_d = (1, \ldots, 1)\in \ZZ^d$.
When the dimension $d$ is clear from the context, we omit it from the subscript and simply write $\onevector$ instead.

We allow vectors to be added to each other, and to be multiplied by a scalar $\lambda \in \RR$.
Both operations come from treating the space $\RR^d$ as a linear space over $\RR$.
Given a finite set~$X \subseteq \RR^d$, we define its \emph{integer cone} as the set
\[
\IntCone(X) \coloneqq \left\{ \sum_{x \in X} \lambda_x \cdot x \mid \lambda_x \in \ZZ^d_{\geq 0} \text{ for  every } x \in X \right\}. 
\]

\subparagraph*{Polytopes.}
In this work a \emph{$d$-dimensional polytope} is a subset of points in $\RR^d$ satisfying a system of linear inequalities with integer coefficients, that is, a set of the form $\poly{P} \coloneqq \{ x \in \RR^d \mid Ax \leq b \}$, where $A \in \ZZ^{d \times m}$ and $b \in \ZZ^m$ for some positive integer $m$.
Then, the \emph{encoding size of $\poly{P}$}, denoted $\encoding(\poly{P})$, is the total number of bits required to encode the matrix~$A$ and the vector~$b$.
We say that the polytope $\poly{P}$ is \emph{bounded} if there exists a number $M \in \ZZ$ such that for all $x \in \poly{P}$ and $i\in [d]$, we have $|x(i)| \leq M$.
%Note that we do not require polytope to be a bounded set in $\RR^d$.

We can now define the main problem studied in this paper, namely \PointInCone.

\smallskip

\begin{tcolorbox}

\smallskip

\noindent \PointInCone

\smallskip

\noindent \textbf{Input:} A positive integer $d$, a bounded polytope $\poly{P} \subseteq \RR^d$ (given by a matrix $A \in \ZZ^{m \times d}$ and a vector $b \in \ZZ^m$ for some integer $m$), and a point $q \in \ZZ^d$.

\smallskip

\noindent \textbf{Question:} Is $q \in \IntCone(\poly{P} \cap \ZZ^d)$?

\end{tcolorbox}

As mentioned in~\cref{sec:intro}, in~\cite{GoemansR20} Goemans and Rothvoss gave an algorithm for \PointInCone that runs in time $\encoding(\poly{P})^{2^{\Oh(d)}} \cdot \encoding(q)^{\Oh(1)}$. In fact, they solved the more general \PolytopeIntersectCone, where instead of a single point $q$, we are given a polytope $\poly{Q}$, and the question is whether $\IntCone(\poly{P} \cap \ZZ^d)\cap \poly{Q}$ is nonempty. In this case, the running time is $\encoding(\poly{P})^{2^{\Oh(d)}} \cdot \encoding(\poly{Q})^{\Oh(1)}$.

% \begin{theorem}
% \label{thm:goemans-rothvoss}
% There is an algorithm that, given two polytopes $\poly{P}, \poly{Q} \subseteq \RR^d$ with $\poly{P}$ being bounded, answers whether $\IntCone(\poly{P} \cap \ZZ^d) \cap \poly{Q} \neq \emptyset$ in time $\encoding(\poly{P})^{2^{\Oh(d)}} \cdot \encoding(\poly{Q})^{\Oh(1)}$.
% 
% In particular (for $\poly{Q}$ consistsing of a single point), \PointInCone admits an algorithm working in time $\encoding(\poly{P})^{2^{\Oh(d)}} \cdot \encoding(q)^{\Oh(1)}$.
% \end{theorem}

\subparagraph*{ETH.}
The Exponential-Time Hypothesis (ETH), proposed by Impagliazzo et al.~\cite{ImpagliazzoPZ01}, plays a fundamental role in providing conditional lower bounds for parameterized problems. It postulates that there exists a constant $c>0$ such that the \ThreeSAT problem cannot be solved in time $\Oh(2^{cn})$, where $n$ is the number of variables of the input formula. As proved in~\cite{ImpagliazzoPZ01}, this entails that there is no algorithm for \ThreeSAT with running time $2^{o(n+m)}$, where $m$ denotes the number of clauses of the input formula; see also~\cite[Theorem~14.4]{platypus}. We refer the reader to \cite[Chapter~14]{platypus} for a thorough introduction to ETH-based lower bounds within parameterized complexity.

\subparagraph*{Subset Sum.}
The classic \SubsetSum problem asks, for a given set $S$ of positive integers and a target integer $t$, whether there is a subset $S' \subseteq S$ such that $\sum_{x \in S'} = t$.
The standard $\mathsf{NP}$-hardness reduction from \ThreeSAT to \SubsetSum takes an instance of \ThreeSAT with $n$ variables and $m$ clauses and outputs an equivalent instance of \SubsetSum where $|S|=\Oh(n+m)$ and $t\leq 2^{\Oh(n+m)}$. By combining this with the $2^{o(n+m)}$-hardness for \ThreeSAT following from ETH, we obtain the following.

\begin{theorem}
\label{thm:subsetsum}
Unless the ETH fails, there is no algorithm solving \SubsetSum in time~$2^{o(n)}$, even under the assumption that $t\leq 2^{\Oh(n)}$. Here, $n$ denotes the cardinality of the set $S$ given on input.
\end{theorem}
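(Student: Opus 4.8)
The plan is to inspect the classical polynomial-time reduction from \ThreeSAT to \SubsetSum and verify that it is in fact \emph{linear}: from a formula with $n$ variables and $m$ clauses it produces a set $S$ with $|S| = \Oh(n+m)$ and a target $t$ with $t \le 2^{\Oh(n+m)}$. Composing such a reduction with the ETH-based $2^{o(n+m)}$ lower bound for \ThreeSAT recalled in \cref{sec:prelims} then immediately yields the claimed $2^{o(n)}$ lower bound for \SubsetSum, and the bound on $t$ comes for free from the reduction itself.

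Concretely, given a \ThreeSAT formula $\varphi$ with variables $x_1, \dots, x_n$ and clauses $C_1, \dots, C_m$, I would work with integers written in base $B \coloneqq 6$ over $n+m$ digit positions: one position $p_i$ for each variable $x_i$ and one position $r_j$ for each clause $C_j$. For every variable $x_i$ I create two integers $v_i$ and $\overline{v}_i$: both have digit $1$ at position $p_i$ and digit $0$ at every other variable position; moreover $v_i$ has digit $1$ at position $r_j$ for each clause $C_j$ containing the literal $x_i$, while $\overline{v}_i$ has digit $1$ at position $r_j$ for each clause $C_j$ containing the literal $\lnot x_i$ (and digit $0$ at all remaining clause positions). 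For every clause $C_j$ I additionally create two ``slack'' integers $s_j$ and $s_j'$, each having digit $1$ at position $r_j$ and digit $0$ everywhere else. Finally, the target $t$ has digit $1$ at every variable position $p_i$ and digit $3$ at every clause position $r_j$. Thus $S$ consists of $2n + 2m = \Oh(n+m)$ integers and $t < B^{\,n+m} = 2^{\Oh(n+m)}$, as desired.

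For correctness I would first observe that over all of $S$ the sum of a single digit column is at most $5 < B$: a variable column is touched only by $v_i$ and $\overline{v}_i$, and a clause column $r_j$ only by the at most three literal-integers corresponding to literals of $C_j$ and by the two slack integers $s_j, s_j'$. Hence, when adding any subfamily of $S$, no carries between positions occur, so a subfamily sums to $t$ exactly when it agrees with $t$ digit by digit. Agreement on the variable positions forces choosing exactly one of $v_i, \overline{v}_i$ for each $i$, which we read as the truth assignment setting $x_i$ to \textsc{true} resp.\ \textsc{false}; agreement on a clause position $r_j$ at value $3$, given that only $s_j$ and $s_j'$ among the contributing integers are slacks, forces that at least one chosen literal-integer contributes to $r_j$, i.e.\ the assignment satisfies $C_j$. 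Conversely, a satisfying assignment chooses one literal-integer per variable, contributing between $1$ and $3$ to each clause column, and the remaining deficit (which lies in $\{0,1,2\}$) is filled by an appropriate subset of that clause's two slack integers; the resulting subfamily sums to $t$. This establishes that $\varphi$ is satisfiable if and only if the produced \SubsetSum instance is a \yesinstance, and the reduction clearly runs in time polynomial in $n + m$.

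Finally, suppose for contradiction that some algorithm solves \SubsetSum in time $2^{o(n')}$ on instances with $n' \coloneqq |S|$ and $t \le 2^{\Oh(n')}$. Running it on the instance produced above --- where $n' = 2n + 2m = \Oh(n+m)$ and $t \le 2^{\Oh(n+m)} = 2^{\Oh(n')}$ --- preceded by the polynomial-time reduction would decide \ThreeSAT in time $2^{o(n+m)} \cdot (n+m)^{\Oh(1)} = 2^{o(n+m)}$, contradicting the ETH. I expect the only step requiring genuine care to be the no-carry argument --- this is precisely why a fixed base strictly larger than the maximum column sum is chosen --- while everything else is routine bookkeeping; in particular it is worth double-checking that the number of produced integers stays linear (rather than, say, quadratic) in $n+m$, since this is what makes the parameter transfer tight.
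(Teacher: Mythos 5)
Your overall approach matches the paper's: compose the classical polynomial-time reduction from \ThreeSAT{} to \SubsetSum{} (producing $|S| = \Oh(n+m)$ and $t \leq 2^{\Oh(n+m)}$) with the $2^{o(n+m)}$ ETH lower bound for \ThreeSAT{}. The paper simply invokes ``the standard reduction'' without detail; you spell out a concrete instantiation of it, which is a reasonable thing to do and the parameter-transfer calculation at the end is exactly right.

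There is, however, a small but genuine flaw in the explicit gadget. For each clause $C_j$ you define two slack integers $s_j$ and $s_j'$ that are \emph{numerically identical} (digit $1$ at position $r_j$, digit $0$ everywhere else). Since \SubsetSum{} takes a \emph{set} $S$ on input, these collapse into a single element, leaving only one unit of slack available in the clause column. A clause satisfied by exactly one chosen literal-integer would then need two units of slack to reach the target digit $3$, which is no longer possible, so a satisfiable formula could be mapped to a \noinstance. The standard fix is to give the two slacks digits $1$ and $2$ at position $r_j$ respectively, raise the clause-column digit of $t$ to $4$, and work in a base of at least $7$, so that the largest possible column sum $3 + 1 + 2 = 6$ still produces no carries; the case analysis ($k$ chosen literals need $4-k$ slack, achievable for $k \in \{1,2,3\}$ and unachievable for $k=0$) then goes through. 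This leaves $|S| = 2n + 2m$ and $t \leq 2^{\Oh(n+m)}$ unchanged, so your final conclusion holds verbatim. You should also make explicit the harmless normalization that every variable appears in some clause and no clause contains a literal together with its negation, so that $v_i \neq \overline{v}_i$ and the $2n+2m$ constructed integers are indeed pairwise distinct.
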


In this work, we rely on a variant of the \SubsetSum problem called \SubsetSumRep.
The difference between those two problems is that in the latter one, we allow the elements from the input set to be taken with any nonnegative multiplicities.

\begin{tcolorbox}

\noindent \SubsetSumRep

\smallskip

\noindent \textbf{Input:} A set of positive integers $\{a_1, a_2, \ldots, a_n\}$ and a positive integer $t$.

\smallskip

\noindent \textbf{Question:} Does there exist a sequence of $n$ nonnegative integers $(\lambda_1, \lambda_2, \ldots, \lambda_n)$ such that $
\sum_{i=1}^n \lambda_i \cdot a_i = t?$

\end{tcolorbox}

The same lower bound as in Theorem~\ref{thm:subsetsum} holds for \SubsetSumRep.
This can be shown via a~simple reduction from \SubsetSum.
As this is standard, we present the proof of the following Theorem~\ref{thm:subsetsumrep} in Appendix~\ref{app:subsetsumrep}.

%\begin{theorem}
\begin{restatable}{theorem}{subsetsumrepthm}
\label{thm:subsetsumrep}
Unless the ETH fails, there is no algorithm solving \SubsetSumRep in time~$2^{o(n)}$, even under the assumption that $t\leq 2^{\Oh(n)}$. Here, $n$ denotes the cardinality of the set given on input.
%\end{theorem}
\end{restatable}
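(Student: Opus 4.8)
The plan is to reduce \SubsetSum to \SubsetSumRep while blowing up the set size only by a constant factor and keeping the target singly-exponential, so that the ETH lower bound of \cref{thm:subsetsum} transfers. Suppose we are given an instance of \SubsetSum consisting of positive integers $S = \{s_1, \ldots, s_n\}$ and target $t$, with $t \leq 2^{\Oh(n)}$. The only obstacle compared to \SubsetSumRep is that in \SubsetSum each element may be used \emph{at most once}, whereas \SubsetSumRep permits arbitrary nonnegative multiplicities. The standard trick to enforce a $0/1$ choice is to work in a positional number system with a large enough base: pick $B \coloneqq n+1$ (so that $B$ exceeds the number of available elements), and replace each $s_i$ by $a_i \coloneqq s_i \cdot B^n + B^{i-1}$, adding also a ``slack'' element $c_j \coloneqq B^{j-1}$ for each $j \in [n]$. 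The new target is $t' \coloneqq t \cdot B^n + \sum_{j=1}^{n} B^{j-1} = t\cdot B^n + \frac{B^n-1}{B-1}$.

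The correctness argument runs as follows. Any selection $S' \subseteq S$ with $\sum_{i \in S'} s_i = t$ can be completed to a \SubsetSumRep solution by taking $\lambda_i = 1$ for $i \in S'$, $\lambda_i = 0$ otherwise, and using each slack element $c_j$ exactly once if $j \notin S'$ and zero times otherwise; then the ``low'' block $\sum_j B^{j-1}$ is matched exactly and the ``high'' block equals $t\cdot B^n$. Conversely, given any nonnegative multiplicities realizing $t'$, look at the equation modulo $B^n$: since each $a_i$ and each $c_j$ contributes one unit to digit position $j-1$ (via the $B^{i-1}$ resp.\ $B^{j-1}$ terms) and the total multiplicity of elements is bounded — here the key point is that the sum of all multiplicities used cannot exceed $B-1 = n$ at any digit without forcing a carry, which the base $B=n+1$ prevents — the coefficient of each digit on the right-hand side is $1$, forcing $\lambda_i + [\text{times } c_i \text{ used}] = 1$ for every $i$. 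In particular $\lambda_i \in \{0,1\}$, and setting $S' = \{i : \lambda_i = 1\}$ and reading off the high block $B^n$ gives $\sum_{i \in S'} s_i = t$. Some care is needed to argue that no carries propagate: one shows that because every coefficient used is at most $n < B$ and there are at most $n$ positions, the representation of $t'$ in base $B$ is unique in the relevant range, so digit-by-digit comparison is valid. This is the one genuinely technical point, but it is entirely routine bookkeeping with positional notation.

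Finally, we check the parameters. The new instance has $2n$ elements, which is $\Oh(n)$, and the new target satisfies $t' \leq t \cdot B^n + B^n \leq 2^{\Oh(n)} \cdot (n+1)^n = 2^{\Oh(n \log n)}$. This is slightly more than singly-exponential, so to land exactly at $t' \leq 2^{\Oh(n)}$ one should instead choose $B$ to be the smallest power of two exceeding $n$, say $B = 2^{\ceil{\log_2(n+1)}}$, which is still $\Oh(n)$ and makes $B^n = 2^{\Oh(n\log n)}$ — still too large. The clean fix is to avoid a single common base and instead use the classical construction where we append, for each element, a distinct ``tag bit'' in a binary expansion: set $a_i \coloneqq s_i \cdot 2^{n} + 2^{i-1}$ and $c_i \coloneqq 2^{i-1}$, with $t' \coloneqq t \cdot 2^{n} + (2^{n}-1)$. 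Now each $s_i \leq t \leq 2^{\Oh(n)}$ gives $a_i \leq 2^{\Oh(n)}$ and $t' \leq 2^{\Oh(n)}$, as required. Since the high-order block $s_i \cdot 2^n$ is separated from the $n$ tag bits by exactly the right amount, and since at most $n$ tag-carrying elements are available while each tag position must sum to exactly $1$, the same digit-by-digit argument shows $\lambda_i \in \{0,1\}$ and equivalence of the instances. Thus a $2^{o(n)}$ algorithm for \SubsetSumRep would yield a $2^{o(n)}$ algorithm for \SubsetSum, contradicting \cref{thm:subsetsum} under the ETH.
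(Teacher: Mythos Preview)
Your reduction has a genuine gap: in the ``clean fix'' you settle on, the slack elements $c_i = 2^{i-1}$ can be used with \emph{arbitrary} multiplicity, and in particular $c_1 = 1$ makes every positive target trivially reachable in \SubsetSumRep. Concretely, take the \SubsetSum instance $S = \{2,3,7\}$, $t = 6$, which is a \noinstance. Your construction (with $2^n = 8$) produces $a_1=17$, $a_2=26$, $a_3=60$, $c_1=1$, $c_2=2$, $c_3=4$, and $t'=55$; but $55 = 55\cdot c_1$, so the output is a \yesinstance of \SubsetSumRep. The same flaw afflicts your base-$(n{+}1)$ version, where again $c_1 = B^0 = 1$. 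The sentence ``at most $n$ tag-carrying elements are available while each tag position must sum to exactly $1$'' conflates the number of \emph{distinct} elements with the total multiplicity; in \SubsetSumRep nothing bounds the latter a~priori, so the digit-by-digit argument does not go through.

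The paper's reduction repairs precisely this by a three-block encoding: besides the low tag bit $2^{i-1}$ (your $B_3$) and a middle block carrying the value $a_i$, each of the $2n$ elements --- including the slacks $b_i$ --- also carries a high bit $2^{n-i}$ in a top block $B_1$, with the target having $2^n-1$ there. Then using the pair $(a_i',b_i)$ with combined multiplicity at least $2$ already forces the sum to exceed $t'$ via the top block, which bounds the multiplicities \emph{before} any carry analysis and makes the induction on tag bits valid. Adding such a top block to your construction would fix the argument while keeping $t' \leq 2^{\Oh(n)}$.
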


\section{Reduction}
\label{sec:reduction}

The entirety of this section is devoted to the proof of our double-exponential hardness result: \cref{thm:main}.

The proof is by reduction from \SubsetSumRep.
Let $\Ic = (\{a_1, \ldots, a_n\}, t)$ be the given instance of \SubsetSumRep. That is, we ask whether there are nonnegative integers $\lambda_1, \ldots, \lambda_n$ such that $\sum_{i=1}^n \lambda_i \cdot a_i = t$, where $a_1, \ldots, a_n, t$ are given positive integers. We may assume that $a_i\leq t$ for all $i\in [n]$ and, following on the hardness postulated by \cref{thm:subsetsumrep}, that $t\leq 2^{\Oh(n)}$.

Let $d \coloneqq \lceil \log_2(n + 1) \rceil + 1$, hence $d$ satisfies the inequality $2^d \geq 2n + 2$ and $d = \Oh(\log n)$.
Let $\chi_0, \chi_1, \ldots, \chi_{2^d-1} \in \ZZ^d$ be all $\{0,1\}$-vectors in $d$-dimensional space, listed in lexicographic order.
Equivalently, $\chi_i$ is the bit encoding of the number $i$, for $i \in [2^d]_0$.
Observe that we have $\chi_i + \chi_{2^d-1-i} = \onevector$ for every $i \in [2^d]_0$.

We define the set $P \coloneqq \{p_0, p_1, \ldots, p_{2^d-1}\} \subseteq \ZZ^{d+1}$ of $2^d$ points as follows.
\[
p_i(j) = \begin{cases}
\chi_i(j), & \text{ for } i \in [2^d]_0 \text{ and } j \in [d]; \\
a_i, & \text{ for } i \in [n] \text{ and } j = d+1;\\
0 & \text{ for } i \in[2^d]_0\setminus [n] \text{ and } j = d+1.
\end{cases}
\]
We remark that the construction of the point set $P$ is inspired by the example of Goemans and Rothvoss provided in~\cite[Section~8]{GoemansR20}. First, we argue that $P$ can be expressed as integer points in a polytope of small encoding size.

%\begin{itemize}%[nosep]
%\item $p_i(j) = \xi_i(j)$, for $i \in \{0, 1, \ldots 2^d-1\}$ and $j \in \{1, 2, \ldots, d\}$.
%\item $p_i(d+1) = a_i$, for $i \in [n]$, and $p_i(d+1) = 0$ otherwise.
%\end{itemize}

\begin{claim}
\label{cl:polytope}
There exists a bounded polytope $\poly{P}$ of encoding size $\Oh(n \log n \cdot \log t)$ such that $\poly{P} \cap \ZZ^{d+1} = P$.
\end{claim}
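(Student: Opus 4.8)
The plan is to construct $\poly{P}$ by hand as an intersection of $\Oh(n)$ halfspaces with small coefficients and then verify the three required properties: $\poly{P}$ is bounded, $P \subseteq \poly{P}$, and $\poly{P} \cap \ZZ^{d+1} \subseteq P$. The guiding observation is that the ``box'' inequalities $0 \le x(j) \le 1$ for $j \in [d]$ already force every integer point of $\poly{P}$ to have its first $d$ coordinates equal to $\chi_k$ for some $k \in [2^d]_0$; so all that is left is to add, for each $k$, inequalities that pin the last coordinate $x(d+1)$ to the prescribed value $v_k$ (where $v_k \coloneqq a_k$ for $k \in [n]$ and $v_k \coloneqq 0$ otherwise) whenever $(x(1), \dots, x(d)) = \chi_k$, while staying slack otherwise.

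The gadget I would use is, for each $i \in [2^d]_0$, the affine form
\[
 h_i(x) \;\coloneqq\; \sum_{j \in [d]} \bigl(1 - 2\chi_i(j)\bigr)\, x(j) \;+\; \|\chi_i\|_1 ,
\]
which morally measures the ``distance from $x$ to $\chi_i$'': a one-line computation (using $\|u-v\|_1 = \|u\|_1 + \|v\|_1 - 2\langle u,v\rangle$ for $\{0,1\}$-vectors) gives $h_i(\chi_k) = \|\chi_i - \chi_k\|_1$ for every $k$, so $h_i(\chi_i)=0$ and $h_i(\chi_k)\ge 1$ whenever $k\ne i$; and since every coefficient $1-2\chi_i(j)$ is $\pm 1$, one sees $0 \le h_i(x) \le d$ on the whole cube $[0,1]^d$, with the minimum $0$ attained exactly at $\chi_i$. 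I then define $\poly{P}$ by the box inequalities together with, for every $i \in [2^d]_0$, the pair
\[
 v_i - t\cdot h_i(x) \;\le\; x(d+1) \;\le\; v_i + t\cdot h_i(x).
\]
Intuitively, at $(x(1),\dots,x(d)) = \chi_i$ these collapse to $x(d+1) = v_i$ (since $h_i = 0$ there), while at $(x(1),\dots,x(d)) = \chi_k$ with $k \ne i$ they only say $|x(d+1) - v_i| \le t\cdot h_i(\chi_k)$ with $t\cdot h_i(\chi_k) \ge t$, hence are satisfied by any value in $[0,t]$ --- which is exactly the range of legal last coordinates.

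The three checks are then short. Boundedness comes from instantiating $i = 0$ in the displayed pair (where $v_0 = 0$ and $h_0(x) = \sum_{j \in [d]} x(j) \in [0,d]$ on the box), giving $|x(d+1)| \le td$ on $\poly{P}$. For $P \subseteq \poly{P}$, substitute $p_k = (\chi_k, v_k)$ into each inequality: the box constraints are immediate; the pair for $i = k$ holds with equality since $h_k(\chi_k) = 0$; and the pair for $i \ne k$ reads $v_i - t\, h_i(\chi_k) \le v_k \le v_i + t\, h_i(\chi_k)$, which holds since $h_i(\chi_k) \ge 1$ and $0 \le v_i, v_k \le t$ (using the standing assumption $a_i \le t$). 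For $\poly{P} \cap \ZZ^{d+1} \subseteq P$, an integer $x \in \poly{P}$ satisfies $(x(1), \dots, x(d)) = \chi_k$ for some $k$ by the box inequalities, and then the pair for $i = k$ forces $x(d+1) = v_k$, so $x = p_k$. Finally, $\poly{P}$ has $2d + 2 \cdot 2^d = \Oh(n)$ facets (as $d = \Oh(\log n)$ and $2^d < 4(n+1)$), each an inequality in $d+1$ variables whose coefficients lie in $\{0, \pm 1, \pm t\}$ and whose right-hand side has absolute value at most $t(d+1)$; since we may additionally assume $t \ge n$ (instances with $t < n$ being solvable in time $\Oh(n^2)$ by the standard subset-sum dynamic programming, hence irrelevant for the reduction), each such integer fits in $\Oh(\log t)$ bits, and the total encoding size is $\Oh(n \cdot \log n \cdot \log t)$.

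The only real idea here is the family $h_i$ --- the realization that a single $\pm 1$-weighted linear functional plus a constant can isolate one vertex of the Boolean cube from all the others while remaining nonnegative across the whole cube; after that, each verification step is a couple of lines. The one place that needs a moment of bookkeeping care is the encoding-size estimate: the right-hand sides $v_i + t\|\chi_i\|_1$ have magnitude $\Theta(td)$, so a naive count would leave a spurious $\log\log n$ factor, and one uses the (harmless) assumption $t \ge n$ to absorb it and land exactly on the claimed bound.
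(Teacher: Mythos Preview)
Your proof is correct and is essentially the same as the paper's. Your affine form $h_i(x)=\sum_j(1-2\chi_i(j))x(j)+\|\chi_i\|_1$ is exactly the expression $\sum_{j:\chi_i(j)=0}x(j)+\sum_{j:\chi_i(j)=1}(1-x(j))$ that the paper uses, and your pair $v_i - t\,h_i(x)\le x(d+1)\le v_i + t\,h_i(x)$ is a rearrangement of the paper's inequalities~(3) and~(4); the only cosmetic differences are that the paper also imposes the (redundant for the integer hull) bound $0\le x(d+1)\le t$, and that your extra assumption $t\ge n$ is not actually needed for the encoding-size bound, since $2^d\log(td)\le 2^d\cdot d\cdot\log t$ whenever $t\ge 2$.
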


\begin{claimproof}
Let $\poly{P}$ be the polytope defined by the following inequalities.
\begin{gather}
  0 \leq x(j) \leq 1 \quad\text{for } j \in [d], \label{eq:bound-first-coordinates} \\
  0 \leq x(d+1) \leq t, \label{eq:bound-last-coordinate} \\
x(d+1)+\sum_{j\colon \chi_i(j) = 0} t \cdot x(j)  + \sum_{j\colon \chi_i(j) = 1} t \cdot (1 - x(j)) \geq p_i(d+1) \quad\text{for } i \in [2^d]_0, \label{eq:ub}\\
t-x(d+1)+\sum_{j\colon \chi_i(j) = 0} t \cdot x(j) + \sum_{j\colon \chi_i(j) = 1} t \cdot (1 - x(j))  \geq t-p_i(d+1)  \text{ for } i \in [2^d]_0 \label{eq:lb}
\end{gather}

By~\eqref{eq:bound-first-coordinates} and~\eqref{eq:bound-last-coordinate},  $\poly{P}$ is bounded. Also, encoding the system of all linear inequalities defining $\poly{P}$ takes
\[
\Oh(2^d \cdot d \cdot \log t) = \Oh(n \log n \cdot \log t)
\]
bits, as desired. It remains to show that $\poly{P} \cap \ZZ^{d+1} = P$.
In what follows, when $i\in  [2^d]_0$, $(\ref{eq:ub}.i)$ denotes the single inequality of the form~\eqref{eq:ub} for this particular $i$, similarly for inequalities of the form~\eqref{eq:lb}.

First we show $\poly{P} \cap \ZZ^{d+1} \subseteq P$.
Pick $x\in \poly{P} \cap \ZZ^{d+1}$.
Since $x\in \ZZ^{d+1}$ and $x$ satisfies~\eqref{eq:bound-first-coordinates}, the first $d$ coordinates of $x$ form a binary encoding of a number ${i}^\star\in [2^d]_0$. Then, $x(j) = \chi_{i^\star}(j)$ for $j \in [d]$, hence by $(\ref{eq:ub}.i^\star)$, $x(d+1)  \ge  p_{i^\star}(d+1)$ and by 
$(\ref{eq:lb}.i^\star)$, $x(d+1)  \le  p_{i^\star}(d+1)$.
It follows that $x(d+1)  =  p_{i^\star}(d+1)$ and hence $x=p_{i^\star}\in P$, as required.

Finally we show $P \subseteq\poly{P} \cap \ZZ^{d+1}$.
Pick $x=p_{i^\star}\in P$ for some ${i}^\star\in [2^d]_0$. We need to show that~\eqref{eq:bound-first-coordinates}--\eqref{eq:lb} hold for $x$.
This is clear for~\eqref{eq:bound-first-coordinates} and~\eqref{eq:bound-last-coordinate}.
The inequality $(\ref{eq:ub}.i^\star)$ for $x$ is just $x(d+1)\ge p_{i^\star}(d+1)$, and this holds since $x(d+1)= p_{i^\star}(d+1)$.
We get $(\ref{eq:lb}.i^\star)$ analogously.
Now assume $i\ne i^\star$ and let $L_i$ be the left hand side of $(\ref{eq:ub}.i)$. 
Since $x(j) \in \{0, 1\}$ for $j \in [d]$ and $x(d+1) \geq 0$, all the summands of $L_i$ are nonnegative.
Moreover, since $i\ne i^\star$, we have $x(j) \neq \chi_i(j)$ for some $j \in [d]$, and then $L_i \geq t \geq p_i(d+1)$, so the inequality $(\ref{eq:ub}.i)$ holds independently of the value of $x(d+1)$.
Analogously, when $L_i$ is the left hand side of $(\ref{eq:lb}.i)$, we get
$L_i \ge 2t-x(d+1) \ge t \ge t- p_i(d+1)$, as required.
\end{claimproof}

Let $\poly{P}$ be the polytope provided by Claim~\ref{cl:polytope}.
Furthermore, let $q \in \RR^{d+1}$ be the point defined as $q \coloneqq t \cdot \onevector = (t, t, \ldots, t)$.
We consider the instance $\Ic' = (d+1, \poly{P}, q)$ of \PointInCone.
Note that $d = \Oh(\log n)$ and $\encoding(\poly{P}, q) = \Oh(n \log n \cdot \log t)$, which in turn is bounded by $\Oh(n^2\log n)$ due to $t\leq 2^{\Oh(n)}$.
Also, one can easily verify that $\Ic'$ can be computed from $\Ic$ in polynomial time.
Now, we prove that the instance $\Ic'$ is equivalent to $\Ic$.

\begin{claim}
\label{cl:equivalence}
$\Ic$ is a \yesinstance of \SubsetSumRep if and only if $\Ic'$ is a \yesinstance of \PointInCone. 
\end{claim}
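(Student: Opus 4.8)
The plan is to prove the two directions of the equivalence separately; one of them is essentially immediate, and all the combinatorial content sits in the other. Throughout I write $q=t\onevector=(t,\dots,t)\in\ZZ^{d+1}$ and recall from \cref{cl:polytope} that $\poly{P}\cap\ZZ^{d+1}=P=\{p_0,\dots,p_{2^d-1}\}$, so it suffices to reason about $\IntCone(P)$.

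For the direction ``$\Ic'$ is a \yesinstance $\Rightarrow$ $\Ic$ is a \yesinstance'', I would simply inspect the last coordinate. If $q=\sum_{i\in[2^d]_0}\mu_i p_i$ for some nonnegative integers $\mu_i$, then looking at coordinate $d+1$ and using that $p_i(d+1)=a_i$ for $i\in[n]$ and $p_i(d+1)=0$ otherwise yields $\sum_{i=1}^n\mu_i a_i=q(d+1)=t$. Hence $(\mu_1,\dots,\mu_n)$ is a valid witness for $\Ic$. Note that this direction does not use the first $d$ coordinates at all; their role is solely to make the converse go through.

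For the direction ``$\Ic$ is a \yesinstance $\Rightarrow$ $\Ic'$ is a \yesinstance'', suppose $\lambda_1,\dots,\lambda_n\in\ZZ_{\ge 0}$ satisfy $\sum_{i=1}^n\lambda_i a_i=t$, and set $s\coloneqq\sum_{i=1}^n\lambda_i$. Since every $a_i\ge 1$, we have $s\le t$. I would then exhibit the following nonnegative integer combination of points of $P$: take $p_i$ and $p_{2^d-1-i}$ each with coefficient $\lambda_i$ for every $i\in[n]$, and take $p_{2^d-1}$ with coefficient $t-s\ge 0$ (these $2n+1$ indices are pairwise distinct, since $2^d\ge 2n+2$). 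Two checks remain. In coordinate $d+1$: because $2^d\ge 2n+2$, both $2^d-1-i$ (for $i\in[n]$) and $2^d-1$ lie outside $[n]$, so those points contribute $0$ there, and the total is $\sum_{i=1}^n\lambda_i a_i=t$. In coordinate $j\in[d]$: using the identity $\chi_i+\chi_{2^d-1-i}=\onevector$ recorded in the construction, together with $\chi_{2^d-1}=\onevector$, the total is $\sum_{i=1}^n\lambda_i+(t-s)=s+(t-s)=t$. Thus the combination equals $t\onevector=q$, witnessing $q\in\IntCone(P)=\IntCone(\poly{P}\cap\ZZ^{d+1})$, so $\Ic'$ is a \yesinstance.

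The only nontrivial point — and the one I expect to be the crux — is the forward construction: after pairing $p_i$ with $p_{2^d-1-i}$ so that each of the first $d$ coordinates sums to $s$, one still has to ``inflate'' those coordinates up to $t$ without disturbing the last coordinate, which is exactly what the all-ones slack point $p_{2^d-1}$ achieves. This step only works because $s\le t$ (forced by positivity of the $a_i$) and because the bound $2^d\ge 2n+2$ keeps the auxiliary indices $2^d-1-i$ and $2^d-1$ clear of $[n]$, so that they carry a zero in coordinate $d+1$.
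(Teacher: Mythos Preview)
Your proof is correct and follows essentially the same approach as the paper: the backward direction reads off the $(d{+}1)$-st coordinate, and the forward direction pairs each $p_i$ with $p_{2^d-1-i}$ (coefficient $\lambda_i$) and then uses $p_{2^d-1}$ with coefficient $t-\sum_i\lambda_i$ to top up the first $d$ coordinates. The paper's argument and yours are the same up to notation.
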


\begin{claimproof}
First, assume that $\Ic$ is a \yesinstance of \SubsetSumRep; that is, there are nonnegative integers $\lambda_1, \lambda_2, \ldots, \lambda_n$ such that $\sum_{i=1}^n \lambda_i \cdot a_i = t$.
Our goal is to show that $q \in \IntCone(\poly{P} \cap \ZZ^{d+1}) = \IntCone(P)$. That is, we need to construct a sequence of nonnegative integers $(\lambda_0', \lambda_1', \ldots, \lambda_{2^d-1}')$ such that
\[
\sum_{i=0}^{2^d-1} \lambda_i' \cdot p_i = q.
\]
First, we set $\lambda_i' \coloneqq \lambda_i$, for $i \in [n]$.
Then, we get the required value at the $(d+1)$-st coordinate, i.e.,
\[
\left( \sum_{i=1}^n \lambda_i' \cdot p_i \right)(d+1) = t = q(d+1).
\]
It remains to set the values of $\lambda_i'$ for $i \in[2^d]_0\setminus [n]$.
Note that $p_i(d+1) = 0$ for $i \in [2^d]_0\setminus [n]$, therefore setting those $\lambda_i'$ does not affect the $(d+1)$-st coordinate of the result.

Consider an index $i \in [n]$.
Recall that $\chi_i + \chi_{2^d-i-1} = \onevector$, and since $2^d \geq 2n + 2$, we have $2^d-i-1 \geq 2^d - n-1 \geq n + 1$.
Hence, by setting $\lambda_{2^d-i-1}' \coloneqq \lambda_i' = \lambda_i$, we obtain that
\[
\lambda_i' \cdot p_i + \lambda_{2^d-i-1}' \cdot p_{2^d-i-1} = (\lambda_i, \lambda_i, \ldots, \lambda_i, \lambda_i a_i).
\]
By applying this procedure for every $i \in [n]$, we get a point $q' \in \ZZ^{d+1}$ of the form $(\Lambda, \Lambda, \ldots, \Lambda, t)$,
where
\[
  \Lambda = \sum_{i=1}^n \lambda_i \leq \sum_{i=1}^n \lambda_i \cdot a_i = t.
\]

To obtain the number $t$ on the first $d$ coordinates of the result, it remains to observe that $p_{2^d-1} = (1, 1, \ldots, 1, 0)$, therefore setting $\lambda_{2^d-1}' \coloneqq t - \sum_{i=1}^n \lambda_i$ produces the desired point $q$. (We set $\lambda_i' \coloneqq 0$ for all $i$ not considered in the described procedure.)

For the other direction, suppose that $\Ic'$ is a \yesinstance of \PointInCone, that is, $q \in \IntCone(P)$.
Then, there exist nonnegative integers $\lambda_i$ (for $i \in [2^d]_0)$ such that
\[
\sum_{i=0}^{2^d-1} \lambda_i \cdot p_i = q.
\]
Comparing the $(d+1)$-st coordinate of both sides yields the equality $\sum_{i=1}^n \lambda_i \cdot a_i = t$. This means that $\Ic$ is indeed a \yesinstance of \SubsetSumRep.
\end{claimproof}

Finally, we are ready to prove Theorem~\ref{thm:main}.
Suppose for contradiction that \PointInCone admits an algorithm with running time $\encoding(\poly{P}, q)^{2^{o(d)}}$.
As argued, given an instance $\Ic$ of \SubsetSumRep with $n$ integers and the target integer $t$ bounded by~$2^{\Oh(n)}$, one can in polynomial time compute an equivalent instance $\Ic' = (d, \poly{P}, q)$ of \PointInCone with $d\leq \Oh(\log n)$ and $\encoding(\poly{P},q)\leq \Oh(n^2\log n)$.
Now, running our hypothetical algorithm on $\Ic'$ yields an algorithm for \SubsetSumRep with running~time
\begin{align*}
\label{eq:main-bound}
\encoding(\poly{P},q)^{2^{o(d)}} = (n^2 \log n)^{2^{o(d)}} = (n^2 \log n)^{n^{o(1)}} \leq 2^{n^{o(1)}\cdot 3\log n} \leq 2^{o(n)},
\end{align*}
% Observe that
% \begin{equation}
% \label{eq:bound-first}
%  2^{2 \log n \cdot n^{o(1)}} \leq 2^{2 \log n \cdot \sqrt{n}} = 2^{o(n)},  
% \end{equation}
% and
% \begin{equation}
% \label{eq:bound-second}
% 2^{\log \log t \cdot n^{o(1)}} \leq 2^{(\log \log t)^2 + n^{o(1)}} \leq 2^{o(\log t)} \cdot 2^{n^{o(1)}} = t^{o(1)} \cdot 2^{n^{o(1)}}.
% \end{equation}
% Plugging the bounds~(\ref{eq:bound-first}) and (\ref{eq:bound-second}) into the bound~(\ref{eq:main-bound}) yields the inequality
% \[
% \encoding(\poly{P})^{2^{o(d)}} \leq 2^{o(n)} \cdot t^{o(1)},
% \]
which contradicts \cref{thm:subsetsumrep}.
This concludes the proof of \cref{thm:main}.

\bibliography{references}

\appendix

\section{Subset Sum with Multiplicities}
\label{app:subsetsumrep}

In this appendix we give a proof of \cref{thm:subsetsumrep}, which we recall here for convenience.

\subsetsumrepthm*
%\begin{theorem}[Theorem~\ref{thm:subsetsumrep}, restated]
%	Unless the ETH fails, there is no algorithm solving \SubsetSumRep in time~$2^{o(n)}$, even under the assumption that $t\leq 2^{\Oh(n)}$. Here, $n$ denotes the cardinality of the set given on input.
%\end{theorem}

\begin{proof}[Proof of Theorem~\ref{thm:subsetsumrep}] We provide a reduction from \SubsetSum to \SubsetSumRep.
Let $\Ic = (\{a_1, a_2, \ldots, a_n\}, t)$ be the input instance of \SubsetSum.
We may assume that $a_i\leq t$ for all $i\in [n]$.
We construct an equivalent instance $\Ic' = (\{a_1', \ldots, a_n', b_1, \ldots, b_n\}, t')$ of \SubsetSumRep as follows.
The bit encodings of integers $a_i'$, $b_i$ and $t'$ are partitioned into three blocks $B_1, B_2, B_3$, where $B_3$ contains the least significant bits, and $B_1$ the most significant ones.
For an integer $x$ and a block~$B$, we denote by $x |_B$ the integer of bit-length at most $|B|$ consisting of the bits of $x$ at the positions within the block $B$.
The instance $\Ic'$ is defined by the following conditions.
\begin{itemize}
  \item Blocks $B_1$ and $B_3$ are of length $n$, while block $B_2$ is of length $\ceil{\log t}$.
  \item For $i \in [n]$,
  \begin{align*}
    a_i'|_{B_j} = \begin{cases}
      2^{n-i} & \text{for } j = 1,\\
      a_i & \text{for } j = 2, \\
      2^{i-1} & \text{for } j = 3;
    \end{cases}
    &
    \quad\text{ and }\quad
    b_i|_{B_j} = \begin{cases}
      2^{n-i} & \text{for } j = 1,\\
      0 & \text{for } j = 2, \\
      2^{i-1} & \text{for } j = 3.
    \end{cases}
  \end{align*}
  \item The target integer $t'$ is given by
  \[
    t'|_{B_j} = \begin{cases}
      2^n-1 & \text{for } j = 1, \\
      t & \text{for } j = 2, \\
      2^n-1 & \text{for } j = 3. \\
    \end{cases}
  \]
\end{itemize}
Note that the instance $\Ic'$ consists of a set of $n' \coloneqq 2n$ positive integers and a target integer $t' \leq 2^{\Oh(n)}\cdot t$. In particular, if $t\leq 2^{\Oh(n)}$ then also $t'\leq 2^{\Oh(n)}$.
Clearly, $\Ic'$ can be computed from $\Ic$ in polynomial time.
Next, we prove that $\Ic'$ is indeed an instance  equivalent to $\Ic$.

\begin{claim}
$\Ic$ is a \yesinstance of \SubsetSum if and only if $\Ic'$ is a \yesinstance of \SubsetSumRep.
\end{claim}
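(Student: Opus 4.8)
\emph{Plan of the proof.}
The forward implication is the easy one. Given a subset $S' \subseteq [n]$ with $\sum_{i \in S'} a_i = t$, the plan is to take each $a_i'$ with $i \in S'$ exactly once, each $b_i$ with $i \notin S'$ exactly once, and no element with larger multiplicity. One then checks block by block that no carry crosses a block boundary: block $B_1$ receives $\sum_{i=1}^{n} 2^{n-i} = 2^n - 1 = t'|_{B_1}$, block $B_3$ receives $\sum_{i=1}^{n} 2^{i-1} = 2^n - 1 = t'|_{B_3}$, and block $B_2$ receives $\sum_{i \in S'} a_i = t = t'|_{B_2}$, which stays within $B_2$ since this block is wide enough to store $t$. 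Hence these multiplicities witness that $\Ic'$ is a \yesinstance.

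For the converse, suppose nonnegative integers $\mu_1, \dots, \mu_n$ and $\nu_1, \dots, \nu_n$ --- the respective multiplicities of $a_1', \dots, a_n'$ and $b_1, \dots, b_n$ --- satisfy $\sum_{i=1}^{n} \mu_i a_i' + \sum_{i=1}^{n} \nu_i b_i = t'$, and put $g_i := \mu_i + \nu_i$. Expanding each $a_i'$ and each $b_i$ according to the block decomposition and regrouping the terms by scale, the equation becomes
\[
  C_1 \cdot 2^{|B_2|+n} + C_2 \cdot 2^n + C_3 \;=\; (2^n - 1)\cdot 2^{|B_2|+n} + t \cdot 2^n + (2^n - 1),
\]
where $C_1 := \sum_i g_i\, 2^{n-i}$, $C_2 := \sum_i \mu_i a_i$, $C_3 := \sum_i g_i\, 2^{i-1}$, and $|B_2|$ is the length of block $B_2$, chosen large enough that $t < 2^{|B_2|}$. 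From this identity I would read off two facts. First, since $C_2, C_3 \geq 0$ and $t \cdot 2^n + (2^n - 1) < 2^{|B_2|+n}$, comparing the most significant block forces $C_1 < 2^n$, i.e.\ $C_1 \leq 2^n - 1$. Second, reducing the identity modulo $2^n$ yields $C_3 \equiv 2^n - 1 \pmod{2^n}$.

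The crux is to deduce from these two facts that $g_i = 1$ for every $i \in [n]$, which I would prove by induction on $i$. In the inductive step, assuming $g_1 = \dots = g_{i-1} = 1$: subtracting $\sum_{j < i} 2^{j-1}$ from $C_3$ and dividing by $2^{i-1}$ turns $C_3 \equiv 2^n - 1 \pmod{2^n}$ into the statement that $g_i$ is odd, hence $g_i \geq 1$; on the other hand, subtracting $\sum_{j < i} 2^{n-j}$ from $C_1$ and using $C_1 \leq 2^n - 1$ gives $g_i\, 2^{n-i} \leq 2^{n-i+1} - 1 < 2^{n-i+1}$, hence $g_i \leq 1$. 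Thus $g_i = 1$. Once $g_i = 1$ for all $i$ we have $C_1 = C_3 = 2^n - 1$, so the displayed identity collapses to $C_2 = t$; and since $\mu_i \leq g_i = 1$, each $\mu_i$ belongs to $\{0, 1\}$. Therefore $S' := \{ i \in [n] : \mu_i = 1 \}$ satisfies $\sum_{i \in S'} a_i = C_2 = t$, so $\Ic$ is a \yesinstance.

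I expect the inductive step --- forcing every multiplicity $g_i$ to be exactly $1$ --- to be the only genuine obstacle. Neither of the two extracted facts is enough on its own: inside a single block, larger multiplicities can be hidden by carries, so one has to play the upper bound coming from block $B_1$ (weights $2^{n-i}$) against the parity and divisibility information coming from block $B_3$ (weights $2^{i-1}$). This is precisely the purpose of equipping the construction with two ``mirrored'' counting blocks rather than a single one; everything else is routine arithmetic.
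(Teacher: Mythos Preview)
Your proposal is correct and follows essentially the same approach as the paper: both argue the forward direction by choosing the obvious $\{0,1\}$-multiplicities, and both handle the converse by proving inductively that $\mu_i + \nu_i = 1$ for every $i$, using parity information from block $B_3$ (weights $2^{i-1}$) together with the upper bound coming from block $B_1$ (weights $2^{n-i}$), and then reading off $\sum_i \mu_i a_i = t$ from block $B_2$. The only difference is presentational --- you phrase the argument via the explicit identity $C_1 \cdot 2^{|B_2|+n} + C_2 \cdot 2^n + C_3 = t'$, whereas the paper compares bits of $L$ and $t'$ directly --- but the underlying reasoning is the same.
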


\begin{claimproof}
($\implies$).
Assume $\Ic$ is a \yesinstance of \SubsetSum.
Let $J \subseteq [n]$ be a~set of indices such that $\sum_{j \in J} a_j = t$.
We construct a~sequence $\lambda_1, \lambda_2, \ldots, \lambda_{2n}$ of $2n$ nonnegative integers as follows.
For $i \in [n]$, we set
\begin{align*}
  \lambda_i = \begin{cases}
  1 & \text{ if } i \in J, \\
  0 & \text{ if } i \not\in J;
  \end{cases}
&
\quad \text{ and }\quad
  \lambda_{n+i} = \begin{cases}
  0 & \text{ if } i \in J, \\
  1 & \text{ if } i \not\in J.
  \end{cases}
\end{align*}
Then it is easy to verify that
\[
  \sum_{i=1}^n \lambda_i \cdot a_i' + \sum_{i=1}^n \lambda_{n+i} \cdot b_i = t',
\]
and thus the sequence $\lambda_1, \lambda_2, \ldots, \lambda_{2n}$ witnesses that $\Ic'$ is a~\yesinstance of \SubsetSumRep.

($\impliedby$).
Assume that $\Ic'$ is a \yesinstance of \SubsetSumRep.
Let $\lambda_1, \lambda_2, \ldots, \lambda_{2n}$ be nonnegative integers such that
\[
  \sum_{i=1}^n \lambda_i \cdot a_i' + \sum_{i=1}^{n} \lambda_{n+i} \cdot b_{i} = t'.
\]
Let $L$ be the left-hand side of the equation above.
Comparing the least significant bit of $L$ and $t'$ yields $\lambda_1 + \lambda_{n+1} \equiv 1 \pmod 2$.
However, if $\lambda_1 + \lambda_{n+1} \geq 2$, then
\[
L|_{B_1} \geq 2 \cdot 2^{n-1} = 2^n > t'|_{B_1},
\]
and consequently $L > t$, which is a contradiction.
Therefore, $\lambda_1 + \lambda_{n+1} = 1$.
Repeating this argument inductively for $i = 2, 3, \ldots, n$ leads us to the conclusion that the equality
\begin{equation}
\label{eq:lambda}
\lambda_i + \lambda_{n+i} = 1
\end{equation}
holds for every $i \in [n]$.
Now, define a set of indices $J \subseteq [n]$ as $J \coloneqq \{i \in [n] \mid \lambda_i = 1\}$.
Then, by comparing $L|_{B_2}$ and $t|_{B_2}$, we must have that
\[
\sum_{j \in J} a_j = t,
\]
since other terms of $L$ do not contribute to $L|_{B_2}$ according to the equation~(\ref{eq:lambda}).
Hence $\Ic$ is a \yesinstance of \SubsetSum, as desired.
\end{claimproof}

We are ready to conclude the proof of~\cref{thm:subsetsumrep}.
Suppose for contradiction there is an algorithm solving \SubsetSumRep in time $2^{o(n')}$ on instances with $n'$ numbers on input and the target integer $t'$ bounded by $2^{\Oh(n')}$.
Then, as explained above, given an instance $\Ic$ of \SubsetSum with $n$ numbers and the target integer $t$ bounded by $2^{\Oh(n)}$, one can in polynomial time compute an equivalent instance $\Ic'$ of \SubsetSumRep with $n'=2n$ numbers and with target $t'\leq 2^{\Oh(n)}\cdot t\leq 2^{\Oh(n)}=2^{\Oh(n')}$.
Running the hypothetical algorithm on $\Ic'$ solves the initial instance $\Ic$ of \SubsetSum in~time
\[
2^{o(n')} = 2^{o(n)},
\]
which contradicts \cref{thm:subsetsum}.
This finishes the proof of \cref{thm:subsetsumrep}.
\end{proof}

\end{document}